\DeclareMathOperator*{\E}{\mathbb{E}}
\let\Pr\relax
\DeclareMathOperator{\Pr}{\mathrm{Pr}}
\newcommand{\prob}[1]{\Pr(#1)}
\DeclareMathOperator{\quant}{\mathit{Q}}
\newcommand{\quantile}[2]{\quant(#1; #2)}
\newcommand{\proj}[0]{\Pi}
\theoremstyle{definition}
\newtheorem{definition}{Definition}
\newtheorem{assumption}{Assumption}
\theoremstyle{plain}
\newtheorem{proposition}{Proposition}
\newtheorem{theorem}{Theorem}
\newtheorem{lemma}{Lemma}
\theoremstyle{remark}
\newtheorem{remark}{Remark}
\crefname{assumption}{Assumption}{Assumptions}
\crefname{equation}{}{}
\Crefname{equation}{Equation}{Equations}
\crefname{lemma}{Lemma}{Lemma}
\crefname{definition}{Definition}{Definitions}
\crefname{theorem}{Theorem}{Theorems}
\crefname{proposition}{Proposition}{Propositions}
\title{\LARGE \bf
Conformal Prediction-Based MPC for Stochastic Linear Systems
}
\newif\ifarXiv
\author{Lukas Vogel$^{1}$,
        Andrea Carron$^{1}$,
        Eleftherios E. Vlahakis$^{2}$,
        and Dimos V. Dimarogonas$^{3}$%
\thanks{This work was partially supported by
            the Swedish Research Council (VR),
            the Knut \& Alice Wallenberg Foundation (KAW), and
            the Horizon Europe Grant SymAware.%
       }
\thanks{$^{1}$L. Vogel and A. Carron are with the
                Institute for Dynamic Systems and Control,
                ETH Zürich, 8092 Zürich, Switzerland.
                {\tt\small \{vogellu, carrona\}@ethz.ch}.}%
\thanks{$^{2}$E. E. Vlahakis is with the
                Division of Computer Engineering and Computer Science,
                University West, 46132 Trollhättan, Sweden.
                {\tt\small eleftherios.vlahakis@hv.se}.}%
\thanks{$^{3}$D. V. Dimarogonas is with the
                Division of Decision and Control Systems,
                School of Electrical Engineering and Computer Science,
                KTH Royal Institute of Technology, 10044 Stockholm, Sweden.
                {\tt\small dimos@kth.se}.}%
}
\begin{document}

\maketitle
\thispagestyle{empty}
\pagestyle{empty}

%%%%%%%%%%%%%%%%%%%%%%%%%%%%%%%%%%%%%%%%%%%%%%%%%%%%%%%%%%%%%%%%%%%%%%%%%%%%%%%%

\begin{abstract}
We propose a stochastic model predictive control (MPC) framework for linear systems subject to joint-in-time chance constraints under unknown disturbance distributions. Unlike existing approaches that rely on parametric or Gaussian assumptions, or require expensive offline computation, the method uses conformal prediction to construct finite-sample confidence regions for the system's error trajectories with minimal computational effort. These probabilistic sets enable relaxation of the joint-in-time chance constraints into a deterministic closed-loop formulation based on indirect feedback, ensuring recursive feasibility and chance constraint satisfaction. Further, we extend to the output feedback setting and establish analogous guarantees from output measurements alone, given access to noise samples. Numerical examples demonstrate the effectiveness and advantages compared to existing approaches.

\end{abstract}

%%%%%%%%%%%%%%%%%%%%%%%%%%%%%%%%%%%%%%%%%%%%%%%%%%%%%%%%%%%%%%%%%%%%%%%%%%%%%%%%

\section{INTRODUCTION}

Model Predictive Control (MPC) \cite{rawlings2020_model} is a well-established
approach for controlling complex dynamical systems where performance and safety
requirements are paramount. When the system model is only partially known or
subject to unknown external disturbances, a key objective is the satisfaction of
constraints under uncertainty. Robust MPC approaches~\cite{chisci2001_systems}
ensure constraint satisfaction for worst-case disturbances, but may be overly
conservative if large disturbances occur only rarely, and are ill-suited if the
disturbance distribution has unbounded support. Stochastic MPC (SMPC), on the
other hand, incorporates distributional information and relaxes safety
requirements through chance constraints \cite{farina2016_stochastic}, which are
generally non-convex~\cite{paulson2019_efficient, paulson2020_stochastic}.
Existing approaches often employ constraint-tightening relaxations, exploiting
probabilistic reachable sets (PRS)~\cite{hewing2020_recursively} or
mean-variance concentration inequalities \cite{farina2015_approach}, but are
typically confined to Gaussian uncertainty or lead to overly conservative
behavior under joint-in-time chance constraints.

Compared to analytical techniques, sampling-based approaches handle cases where
distributional information is limited or completely unknown, and usually derive
from scenario optimization (SO)~%
\cite{campi2011_samplinganddiscarding, campi2019_scenario},
where probabilistic constraints are relaxed using samples
\cite{hewing2020_scenariobased}.

Scenario-based methods can handle data-driven uncertainty quantification in a
non-conservative manner; however, formal guarantees require the relaxed program
to be convex and often entail significant computational effort.
Conformal prediction (CP)
\cite{vovk2005_algorithmic, angelopoulos2023_conformal} is an alternative
distribution-free framework from machine learning that provides a powerful means
to quantify uncertainty directly from data under minimal assumptions.
It provides tight prediction
regions, satisfying a coverage guarantee from finite samples~%
\cite{angelopoulos2023_conformal}, and has recently been applied to control~%
\cite{lindemann2025formal}. 
 
Here, we use CP to address an SMPC problem under joint-in-time chance
constraints under unknown probability distributions, exploiting its
distribution-independence and low computational requirements.

Closely related to our work are SMPC approaches where SO and CP have been
explored for data-driven handling of chance constraints. Joint-in-time chance
constraints have been considered in~\cite{prandini2012_randomized}, but the
associated online constraint-discarding procedures are computationally
prohibitive for real-time implementation. In \cite{hewing2020_scenariobased}, SO
is used to construct probabilistic sets for the error system, performing
constraint discarding offline, however, assuming stepwise chance constraints and
state measurements are available. Other approaches involve risk allocation~%
\cite{paulson2019_efficient, paulson2020_stochastic} or Monte Carlo simulations
to evaluate the residual probability of joint-in-time constraint violations, at
the cost of online sampling~\cite{wang2021_recursive}. A shrinking-horizon setup
using CP is investigated in~\cite{stamouli2024recursively}, but considers
deterministic systems and unsafe regions induced by other uncontrollable agents.
In \cite{vlahakis2024_conformal}, CP was employed to construct probabilistic
sets for the error system based on state availability, relaxing a stochastic
optimal control problem into a finite-horizon open-loop formulation without
recursive feasibility guarantees.

To address these limitations, this paper proposes a computationally lightweight
SMPC framework for linear systems under joint-in-time chance constraints with
unknown disturbance distributions. The approach leverages CP to construct
finite-sample confidence regions for the system’s error trajectories, enabling
a tractable, data-driven relaxation of the joint probabilistic constraints.
We guarantee recursive feasibility for the relaxed receding-horizon control
problem built on indirect feedback~\cite{hewing2020_recursively},
and extend the framework to the output-feedback setting with unknown measurement
noise, assuming a finite set of noise samples. We prove joint-in-time chance
constraint satisfaction in closed loop based only on available output
measurements, opening new data-driven directions for the challenging output-%
feedback case beyond Gaussian settings.

\ifarXiv
All code is available at
\else
All proofs of our technical results are available in
\cite{vogel2025conformalmpc} and the code is available at
\fi
\url{gitlab.ethz.ch/ics/conformal-prediction-smpc.git}.

\section{PRELIMINARIES}

\subsection{Notation}

%The probability of an event $E$ is $\prob{E}$.
For a random variable $x$ following a distribution $\mathcal Q^x$, we write
$x \sim \mathcal Q^x$, its expectation as $\E[x]$ and denote by $x^{(j)}$ a
sample of $x$.
The $\alpha$-quantile of $\mathcal Q^x$ is
$\quantile{\alpha}{x} = \inf \{\,q \in \mathbb R \mid \prob{x \geq q} \geq
\alpha\,\}$. For an empirical distribution supported on samples
$x^{(1)}, \dots, x^{(n)}$, we write the quantile as
$\quantile{\alpha}{\{x^{(1)}, \dots,x^{(n)}\}}$.
For $x \in \mathbb R^n$ and a positive definite matrix
$0 \prec A \in \mathbb R^{n \times n}$, we denote by
$\norm{x}_A = \sqrt{x^\top A x}$ the weighted norm.
The Minkowski sum of two sets $\mathcal A, \mathcal B$ is
$\mathcal A \oplus \mathcal B := \{\,a + b \mid a \in \mathcal A,
b \in \mathcal B\,\}$, and the Pontryagin difference is
$\mathcal A \ominus \mathcal B := \{\,x \mid x + b \in \mathcal A,
\forall b \in \mathcal B\,\}$.

\subsection{Conformal Prediction}

Suppose we have $M + 1$ real-valued, independent and identically distributed
(i.i.d.) random variables, $R^{(0)}, R^{(1)}, \dots, R^{(M)}$, and aim to find
a \emph{prediction region}~$\mathcal C$ for
$R^{(0)}$ using $R^{(1)}, \dots, R^{(M)}$ (sometimes referred to as
\emph{calibration dataset}) that satisfies, for $\vartheta \in (0, 1)$,
\begin{equation*}
  \prob{R^{(0)} \in \mathcal C(R^{(1)}, \dots, R^{(M)})} \geq 1 - \vartheta.
\end{equation*}
Then, such a region is given by the set
\begin{equation*}
  \mathcal C = \left\{\,q \in \mathbb R\,\mid q \leq \quantile{1 -
\vartheta}{\{R^{(1)}, \dots, R^{(M)}, \infty\}}\right\}.
\end{equation*}

This result is based on the following lemma:

\begin{lemma}[{\cite[Lemma~1]{tibshirani2019_conformal}}]
    \label{thm:quantile-lemma}
    Let $R^{(0)}$, $R^{(1)}, \dots, R^{(M)}$ be $M + 1$ i.i.d. random variables,
    and $\vartheta \in (0, 1)$. Then,
    \begin{equation}
      \label{eq:cp-coverage}
      \prob{R^{(0)} \leq \quantile{1 - \vartheta}
        {\{R^{(1)}, \dots, R^{(M)}, \infty\}}}
      \geq 1 - \vartheta.
  \end{equation}
\end{lemma}
No further assumption on the $R^{(k)}$ is needed than i.i.d., which can even be
weakened to exchangeability \cite{angelopoulos2023_conformal}. Note that the
probability in \cref{eq:cp-coverage} is taken over the realization of $R^{(0)}$
and the samples $R^{(1)}, \dots, R^{(M)}$ jointly. Unfortunately,
calibration-conditional guarantees of the form $\prob{R^{(0)} \in \mathcal
C(R^{(1)}, \dots, R^{(M)}) \mid R^{(1)}, \dots, R^{(M)}} \geq 1 - \vartheta$
are not obtainable without further assumptions~%
\cite{lindemann2025formal}, but probably approximately correct (PAC)
guarantees of the form
\begin{equation}
    \label{eq:pac-type-guarantee}
  \prob{\prob{R^{(0)} \leq \tilde q} \geq 1 - \vartheta} \geq 1 - \epsilon,
\end{equation}
can be made through appropriate ``relaxing'' on $\tilde q$
\cite{vovk2012_conditional}.

In the following, we refer to marginal guarantees as in \cref{eq:cp-coverage},
but all results can be transformed to the PAC-type form
\cref{eq:pac-type-guarantee}.

\subsection{Problem Setup}

We consider a stochastic linear time-invariant system
\begin{equation}
  \label{eq:dynamics}
  x(t + 1) = A x(t) + B u(t) + w(t), \quad w(t) \sim \mathcal Q^{w(t)},
\end{equation}
with state $x(t) \in \mathbb R^{n_x}$ and input $u(t) \in \mathbb R^{n_u}$,
where the system matrices $A \in \mathbb R^{n_x \times n_x}$ and $B \in \mathbb
R^{n_x \times n_u}$ are known. System \cref{eq:dynamics} is subject to
a disturbance $w(t) \in \mathbb R^{n_x}$ drawn from an unknown distribution
$\mathcal Q^{w(t)}$, where we assume access to $M$ i.i.d. samples of the
disturbance trajectory,
\begin{equation}
    \label{eq:disturbance-trajectories}
    W^{(k)} = (w^{(k)}(0), \dots, w^{(k)}(\bar N - 1)), \quad k = 1, \dots, M.
\end{equation}
In \cref{eq:disturbance-trajectories}, the entries $w^{(k)}(t)$ may be
correlated across time, but the trajectories are i.i.d.
as $W^{(k)} \sim \mathcal Q^W$.

The system is subject to state and input constraints, $x(t)
\in \mathcal X$, and $u(t) \in \mathcal U$, respectively, where $\mathcal X$ and
$\mathcal U$ are convex compact sets with the origin in their
interior. The problem setup is as follows: We aim to control \cref{eq:dynamics}
over the long but finite horizon $\bar N$, starting from the
known initial condition $x(0) = x_0$, and impose a
maximum probability of constraint violation $\vartheta \in (0, 1)$ over the
entire horizon $\bar N \in \mathbb N$, as
\begin{align}
    \label{eq:chance-constraints-x}
    \prob{x(t) \in \mathcal X, \forall t = 1, \dots, \bar N \mid x(0)} &\geq 1
- \vartheta,\\
    \label{eq:chance-constraints-u}
    \prob{u(t) \in \mathcal U, \forall t = 1, \dots, \bar N \mid x(0)} &\geq 1
- \vartheta.
\end{align}

Equations \crefrange{eq:chance-constraints-x}{eq:chance-constraints-u} are in
the form of \emph{joint-in-time} chance constraints, which arise naturally for
example in safety-critical planning where constraint satisfaction is required
at all times. They are known to be hard to encompass and non-convex~%
\cite{farina2016_stochastic, paulson2020_stochastic}.
For brevity, we will drop the conditioning on $x(0)$ in the following.
The objective is to minimize a sum of stage and terminal costs,
subject to the dynamics \cref{eq:dynamics}, the disturbances, and the
chance constraints
\crefrange{eq:chance-constraints-x}{eq:chance-constraints-u}. This yields the
following stochastic optimal control problem:
\begin{mini!}|s| % short format
    {U}          % opt. variable
    {            % cost function
        \mathbb E_W \left[
            V_\text{f}(x(\bar N)) + \sum_{t = 0}^{\bar N - 1} \ell(x(t), u(t))
        \right]
        \label{opt:sftoc-cost}
    }
    {\label{opt:sftoc}} % label for opt. problem
    {}          % result expr of min() [e.g., J^*(U) = ]
    \addConstraint{x(t + 1)}{= A x(t) + B u(t) + w(t)}
    \addConstraint{\Pr(x(t)}{\in \mathcal X, \forall t = 1, \dots, \bar N) \geq
1 - \vartheta}
    \addConstraint{\Pr(u(t)}{\in \mathcal U, \forall t = 1, \dots, \bar N) \geq
1 - \vartheta}
    \addConstraint{x(0)}    {= x_0}
    \addConstraint{w(t)}    {\sim \mathcal Q^{w(t)}},
\end{mini!}
with $U = (u(0), \dots, u(\bar N - 1))$, $W = (w(0), \dots, w(\bar N - 1))$.

Due to linearity, one can decouple \cref{eq:dynamics} into nominal and error
dynamics by introducing a nominal state $z(t)$, following deterministic
dynamics, and choosing the pre-stabilizing feedback control law
$u(t) = Ke(t) + v(t)$, where $e(t) = x(t) - z(t)$, yielding
\begin{align}
    \label{eq:nominal-dynamics}
    z(t + 1) &= A z(t) + B v(t),\\
    \label{eq:error-dynamics}
    e(t + 1) &= (A + BK) e(t) + w(t),
\end{align}
with $z(0) = x(0)$, hence $e(0) = 0$, and $K$ such that $A_K := A + BK$ is
Schur stable.

Our approach in the following may be summarized by:
1)~Probabilistically bound the evolution of \cref{eq:error-dynamics} based on
  samples, yielding a more informative uncertainty characterization than in
  \cite{vlahakis2024_conformal}.
2)~Perform appropriate constraint tightening for~\cref{eq:nominal-dynamics} to
  ensure satisfaction of \crefrange{eq:chance-constraints-x}
  {eq:chance-constraints-u}.
3)~Introduce a receding-horizon relaxation of \cref{opt:sftoc} to compute $v(t)$
  in closed loop.
4)~Extend the method to the output feedback case.

\section{MAIN RESULTS}

\subsection{Conformal bounds for the error system}
\label{sec:conformal-error-bounds}

Motivated by the problem setup of joint-in-time chance constraints, we will use
CP in the following to find convex confidence regions for the error system.
Given that $e(0) = 0$ in \cref{eq:error-dynamics}, the error system evolves
independently of $z(t)$, driven only by the disturbance.

By the decoupling of the error system, i.i.d. disturbance trajectories
$\{ W^{(k)} \}_{k = 1}^M$ as in \cref{eq:disturbance-trajectories} give rise to
i.i.d. error trajectories $\{ E^{(k)} \}_{k = 1}^M$ by propagation through
\cref{eq:error-dynamics}, viz.
\begin{equation}
    \label{eq:error-trajectory-from-disturbances}
    \begin{split}
        E^{(k)} &= (e^{(k)}(1), \dots, e^{(k)}(\bar N))\\
                &= (w^{(k)}(0), \dots, \sum_{i = 0}^{\bar N - 1}
                    A_K^{\bar N - i - 1} w^{(k)}(i)).
    \end{split}
\end{equation}

For any $i \neq j$, $E^{(i)}$ and $E^{(j)}$ are independent since they are
generated from independent disturbance trajectories.
They are identically distributed since they share the same dynamics
\cref{eq:error-dynamics} and $w^{(i)}(t), w^{(j)}(t)\sim\mathcal Q^{w(t)}$.
Thus, we may leverage \Cref{thm:quantile-lemma} to find a confidence region for
\cref{eq:error-dynamics}, as is detailed below:
\begin{definition}
    \label{def:confidence-region}
    A confidence region $\mathcal E$ for the error process $E = (e(1), \dots,
    e(\bar N))$ of probability $1-\vartheta$, $\vartheta{\in}(0, 1),$ satisfies
    \begin{equation*}
        \prob{E \in \mathcal E} = \prob{(e(1), \dots, e(\bar N)) \in \mathcal E}
        \geq 1 - \vartheta.
    \end{equation*}
\end{definition}
\begin{lemma}
    \label{thm:cp-confidence-regions}
    Suppose $s\colon (\mathbb R^{n_x})^{\bar N} \to \mathbb R$ is a convex
    function (\emph{score function}) and that a dataset
    $\mathcal D = \{ E^{(k)} \}_{k = 1}^{M}$
    of i.i.d. trajectories of the error process $E = (e(1), \dots, e(\bar N))$
    is available. Let
    \begin{equation*}
        R^{(k)} = s(E^{(k)}), \quad \forall k = 1, \dots, M,
    \end{equation*}
    and pick $\hat q = \quantile{1 - \vartheta}
    {\{R^{(1)}, \dots, R^{(M)}, \infty\}}$
    for some violation probability $\vartheta \in (0, 1)$.
    Then, the set defined by
    \begin{equation*}
        \mathcal E = \{\,Y \in (\mathbb R^{n_x})^{\bar N} \mid s(Y) \leq
        \hat q\,\}
    \end{equation*}
    is a convex conformal confidence region of level $1 - \vartheta$ for a
    newly-drawn sample $E^{(0)}$ of the error process $E$.
\end{lemma}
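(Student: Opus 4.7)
The plan is to apply the quantile lemma \cref{thm:quantile-lemma} to the scalar residuals $R^{(k)} = s(X^{(k)})$ and then translate the resulting one-dimensional containment back into the trajectory sublevel set $\mathcal{C}$. Convexity of $\mathcal{C}$ will be essentially free, so all of the work sits in establishing the coverage probability.

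First I would verify the hypotheses of \cref{thm:quantile-lemma}. Since $X^{(0)}, X^{(1)}, \dots, X^{(M)}$ are i.i.d.\ by assumption and $s$ is a fixed deterministic map, the residuals $R^{(0)}, R^{(1)}, \dots, R^{(M)}$ are again i.i.d.\ real-valued random variables. Next I would reconcile the two quantile expressions at play: the one used in the statement, $\hat q = \quantile{\ceil{(M+1)p}/M}{R^{(1)}, \dots, R^{(M)}}$, and the one appearing in \cref{thm:quantile-lemma}, $\tilde q = \quantile{p}{R^{(1)}, \dots, R^{(M)}, \{\infty\}}$. Letting $R_{(1)} \leq \cdots \leq R_{(M)}$ denote the order statistics of the calibration residuals, $\tilde q$ is the $\ceil{p(M+1)}$-th smallest element of the $(M+1)$-point augmented multiset. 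When $\ceil{p(M+1)} \leq M$, this index selects $R_{(\ceil{p(M+1)})}$, which coincides with $\hat q$ by the definition of the $\ceil{(M+1)p}/M$-quantile over $M$ points; when $\ceil{p(M+1)} = M+1$, both expressions reduce to $+\infty$ (since the required level exceeds one). Hence $\hat q = \tilde q$ in every case.

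Invoking \cref{thm:quantile-lemma} at level $\alpha = p$ then yields $\Pr(R^{(0)} \leq \hat q) \geq p$. Rewriting the event, $\{R^{(0)} \leq \hat q\} = \{s(X^{(0)}) \leq \hat q\} = \{X^{(0)} \in \mathcal{C}\}$, which is precisely the coverage required by \cref{def:confidence-region}. Convexity of $\mathcal{C}$ is immediate: it is the $\hat q$-sublevel set of a convex function. The only delicate point, and the one I would write out carefully, is the order-statistic bookkeeping that identifies $\hat q$ with the standard conformal quantile on the $\{\infty\}$-augmented dataset, including the edge case where the ceiling pushes the index past $M$ and renders $\mathcal{C}$ trivial.
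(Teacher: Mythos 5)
Your proposal is correct and follows essentially the same route as the paper's proof: the scores $R^{(k)} = s(X^{(k)})$ inherit the i.i.d.\ property, \cref{thm:quantile-lemma} gives the coverage bound, and convexity follows because $\mathcal C$ is a sublevel set of a convex function. The only difference is that you explicitly carry out the order-statistic bookkeeping identifying $\hat q$ with the quantile of the $\{\infty\}$-augmented sample (including the edge case $\ceil{(M+1)p} = M+1$), a step the paper's proof uses implicitly without comment.
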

\ifarXiv
\begin{proof}
    By assumption, any realization $E^{(0)}$ of the random process $E$ is
    i.i.d. with the trajectories in $\mathcal D$.
    Thus, $R^{(0)} = s(E^{(0)})$ is i.i.d. with $\{ R^{(1)}, \dots, R^{(M)}\}$.
    By~\cref{thm:quantile-lemma},
    \begin{equation*}
        \prob{R^{(0)} \leq \hat q} = \prob{E^{(0)} \in \mathcal E}
        \geq 1 - \vartheta.
    \end{equation*}
    Further, sublevel sets of convex functions are convex.
\end{proof}
\fi

Under mild assumptions~\cite{tibshirani2019_conformal}, it can be shown that the
conformal confidence regions are non-conservative, i.e., they satisfy
$1 - \vartheta \leq \prob{E^{(0)} \in \mathcal E} \leq
 1 - \vartheta + \frac{1}{M + 1}$ \cite{angelopoulos2023_conformal}.
\Cref{thm:cp-confidence-regions} provides a \emph{joint-in-time} coverage
guarantee for the entire random process $E$. In contrast, an equivalent
guarantee may be derived from stepwise confidence regions $\mathcal E_t$ for
$e(t)$ through Boole's inequality, but this will in general result in
conservative bounds~\cite{farina2016_stochastic}.
\begin{remark}
    If $\mathcal E$ is a confidence region of level $1 - \vartheta$ for the
    error process $E$, stepwise \emph{projected confidence regions} are needed
    for stepwise constraints, which can be defined by
    \begin{equation}
        \label{eq:projected-confidence-region}
        \mathcal E_t := \proj_t(\mathcal E),
    \end{equation}
    where $\proj_t\colon (\mathbb R^{n_x})^{\bar N} \to \mathbb R^{n_x}$ is the
    projection onto the $t$-th component of the product space
    $(\mathbb R^{n_x})^{\bar N}$. This implies that
    \begin{equation*}
        E \in \mathcal E \implies e(t) \in \mathcal E_t,
            \forall t = 1, \dots, \bar N,
    \end{equation*}
    and therefore we can conclude
    \begin{equation}
        \label{eq:probability-in-tube}
        \prob{e(t) \in \mathcal E_t, \forall t = 1, \dots, \bar N} \geq
        \prob{E \in \mathcal E}.
    \end{equation}
    Hence, the projected sets, when applied stepwise in receding horizon
    control, will preserve the coverage. If $\mathcal E$ is further the
    Cartesian product of $\mathcal E_t$, \cref{eq:probability-in-tube} reduces
    to an equality.
\end{remark}

We next examine the choice of an appropriate score function
for \cref{thm:cp-confidence-regions}. Multiple score functions for CP have been
proposed, including the (weighted) maximum score~%
\cite{vlahakis2024_conformal, cleaveland2024_conformal},
\begin{equation*}
    s(e(1), \dots, e(\bar N)) :=
        \max_{t = 1, \dots, \bar N} \alpha_t \norm{e(t)}.
\end{equation*}
The weights $\alpha_t \geq 0$ scale the confidence region at each time step and
may be optimized; but this requires solving a linear complementarity program,
growing quickly with the number of available samples
\cite{cleaveland2024_conformal}. Further, to retain the i.i.d. requirements of
\cref{thm:quantile-lemma}, this optimization must be over a separate dataset
different from conformal calibration.

We propose a more lightweight approach:
\begin{definition}
    \label{def:mahalanobis-distance}
    For a random variable $e \sim \mathcal Q^e$ with mean $\mu_e$ and covariance
    $\Sigma_e$, the Mahalanobis distance of a point $x$ to $\mathcal Q^e$ is
    \begin{equation*}
        d(x; \mu_e, \Sigma_e) =
            \sqrt{(x - \mu_e)^\top \Sigma_e^{-1} (x - \mu_e)}.
    \end{equation*}
\end{definition}
The Mahalanobis distance can be used as a score function, taking the
trajectory-wise maximum and using sample estimates to obtain a joint-in-time
coverage guarantee, viz.
\begin{equation}
    \label{eq:mahalanobis-score-computation}
    s(E^{(k)}) = \max_{t \in \{ 1, \dots, \bar N \}}
        d(e^{(k)}(t); \hat \mu_t, \hat \Sigma_t).
\end{equation}
To find $\hat \mu_t, \hat \Sigma_t$ for \cref{eq:mahalanobis-score-computation},
we proceed as follows: Partitioning the dataset of i.i.d. samples $\mathcal D$
into $\mathcal D_\text{fit}$ and $\mathcal D_\text{cal}$, initially use
$\mathcal D_\text{fit}$ to compute the sample mean and covariance as
\begin{align}
    \label{eq:empirical-mean}
    \hat \mu_t &= \frac{1}{\abs{\mathcal D_\text{fit}}}
        \sum_{k \in \mathcal D_\text{fit}} e^{(k)}(t),\\
    \label{eq:empirical-covariance}
    \hat \Sigma_t &= \frac{1}{\abs{\mathcal D_\text{fit}} - 1}
        \sum_{k \in \mathcal D_\text{fit}}
        (e^{(k)}(t) - \hat \mu_t) (e^{(k)}(t) - \hat \mu_t)^\top.
\end{align}
Subsequently, use the sample estimates in
\cref{eq:mahalanobis-score-computation} to compute the scores for each
trajectory $E^{(k)} \in \mathcal D_\text{cal}$. Since the trajectories in
$\mathcal D_\text{cal}$ were ``held out'' during the calculation of
$\hat \mu_t, \hat \Sigma_t$, their scores $s(E^{(k)})$ are i.i.d. with the score
of the realization of the true error trajectory $s(E^{(0)})$; thus,
\cref{thm:cp-confidence-regions} may be used with $\mathcal D = 
\mathcal D_\text{cal}$ to find the region
\begin{equation*}
  \label{eq:mahalanobis-region}
  \mathcal E = \{\,E \mid \|e(t) - \hat \mu_t\|_{\hat \Sigma^{-1}_t} \leq
    \hat q, \forall t = 1, \dots, \bar N\},
\end{equation*}
where $\hat q$ is the conformal quantile from \cref{thm:cp-confidence-regions}.

In contrast to the methods from \cite{cleaveland2024_conformal}, finding the
parameters of this score function is computationally lightweight; nevertheless,
the resulting ellipsoidal sets adapt to the mean and variance at each time step.

\subsection{Indirect Feedback SMPC}
\label{sec:indirect-feedback-smpc}

We will now return to the original problem: Find an approximation to
\cref{opt:sftoc} leveraging feedback to control \cref{eq:dynamics} such that
the joint-in-time chance constraints
\crefrange{eq:chance-constraints-x}{eq:chance-constraints-u} are satisfied in
closed loop. To that end, we employ indirect feedback~%
\cite{hewing2020_recursively}: we state the optimization problem using the
nominal dynamics \cref{eq:nominal-dynamics} and ensure constraint satisfaction
through appropriate tightening while introducing feedback from the true state
indirectly from available samples.
The initial state is chosen at each iteration as
\begin{equation}
    \label{eq:if-update-law}
    z_0(t + 1) = z^*_1(t),
\end{equation}
where $z_1^*(t)$ is the predicted nominal state $z(t + 1)$ of the optimal state
trajectory obtained at
time $t$. This initialization preserves feasibility of the optimization problem,
and the closed-loop control law
\begin{equation}
    \label{eq:if-cpmpc-feedback-law}
    u(t) = Ke(t) + v_0^*(t) = K(x(t) - z(t)) + v_0^*(t)
\end{equation}
ensures that the true state tracks the nominal trajectory. The error system
evolves as \cite{hewing2020_recursively}
\begin{equation}
    \label{eq:error-dynamics-cl}
    \begin{split}
        e(t + 1) &= x(t + 1) - z_0(t + 1) = x(t + 1) - z_1^*(t)\\
                 &= A x(t) + B u(t) - A z_0^*(t) - B v_0^*(t)\\
                 &= A_K e(t) + w(t).
    \end{split}
\end{equation}
Note the symmetry to \cref{eq:error-dynamics}: The error in closed loop evolves
\emph{equal in distribution} to the error trajectories obtained from~%
\cref{eq:error-trajectory-from-disturbances}, which allows
leveraging \cref{thm:cp-confidence-regions}: The receding horizon MPC
may ``inherit'' the guarantees from offline designed probabilistic sets
for the error process. Given a confidence region $\mathcal E$ of probability
$1 - \vartheta$ for the error system, chance constraint satisfaction can thus be
ensured through tightening of the state and input constraints as
\begin{align}
    \label{eq:tightened-state-constraints}
    \mathcal Z_t &= \mathcal X \ominus \mathcal E_t, &
         \forall t &= 1, \dots, \bar N,\\
    \label{eq:tightened-input-constraints}
    \mathcal V_t &= \mathcal U \ominus K \mathcal E_t, &
         \forall t &= 1, \dots, \bar N,
\end{align}
where $\mathcal E_t$ is the projected confidence region as in
\cref{eq:projected-confidence-region}. As $\mathcal X,~\mathcal U$ are convex,
the tightened state and input constraints in
\crefrange{eq:tightened-state-constraints}{eq:tightened-input-constraints}
remain convex. If $\mathcal X, \mathcal U$ are additionally polytopes in
half-space representation, the half-spaces may be tightened by the support
function of the ellipsoid.

\begin{proposition}
    \label{thm:constraint-satisfaction}
    Let $\mathcal E$ be a confidence region of level $1 - \vartheta$
    for the error process $E = (e(1), \dots, e(\bar N))$, found from a
    dataset of i.i.d. disturbance trajectories $\{ W^{(k)} \}_{k = 1}^M$ using~%
    \cref{eq:error-trajectory-from-disturbances}, $s$ be a score
    function as in \cref{thm:cp-confidence-regions} and $\mathcal X$,
    $\mathcal U$ be the state and input constraint sets of \cref{eq:dynamics}.
    If $v(t) \in \mathcal V_t$ is chosen such that $z(t) \in \mathcal Z_t$ with
    $\mathcal Z_t$, $\mathcal V_t \neq \emptyset$ as defined in
    \crefrange{eq:tightened-state-constraints}{eq:tightened-input-constraints},
    then the realizations of the closed-loop state and input trajectories
    $X^{(0)} = (x^{(0)}(1), \dots, x^{(0)}(\bar N))$ and 
    $U^{(0)}(t) = (u^{(0)}(0), \dots, u^{(0)}(\bar N - 1))$ satisfy
    \crefrange{eq:chance-constraints-x}{eq:chance-constraints-u}.
\end{proposition}
\ifarXiv
\begin{proof}
We denote by $E^{(0)}$ the error trajectory realized by the closed loop system,
initialized at $e^{(0)}(0) = 0$. It evolves according to
\cref{eq:error-dynamics-cl}, which matches \cref{eq:error-dynamics} in
distribution. Since the realized disturbance trajectory $W^{(0)}$ is drawn
i.i.d. with the calibration trajectories $\{ W^{(k)} \}_{k = 1}^{M}$, the
distribution of $E^{(0)}$ matches the $\mathcal D = \{ E^{(k)} \}_{k = 1}^M$
following \cref{eq:error-trajectory-from-disturbances}. Since it is also
independent of $\mathcal D$, \cref{thm:cp-confidence-regions} implies
$\prob{s(E^{(0)}) \leq \hat q} \geq 1 - \vartheta$.
Chance constraint satisfaction for the state holds since (where $\forall t$
refers to $t \in \{ 1, \dots, \bar N \}$ for brevity)
\begin{align*}
    \prob{x^{(0)}(t) \in \mathcal X, \forall t}
        &= \prob{z(t) + e^{(0)}(t) \in \mathcal X, \forall t}\\
        &\overset{(\star)}\geq \prob{z(t) + e^{(0)}(t) \in
         \mathcal Z_t \oplus \mathcal E_t, \forall t}\\
        &= \prob{e^{(0)}(t) \in \mathcal E_t, \forall t}\\
        &\overset{\cref{eq:probability-in-tube}}{\geq}
            \prob{E^{(0)} \in \mathcal E}
        \geq 1 - \vartheta.
\end{align*}
We used in $(\star)$ that $(\mathcal X \ominus \mathcal E_t)\oplus \mathcal E_t
\subseteq \mathcal X$. An analogous argument holds for the input constraints.
\end{proof}
\fi

\Cref{thm:constraint-satisfaction} guarantees that conformal bounds for the
error system may be computed offline by propagating the disturbances through the
error system dynamics, choosing a score function, e.g.,
\cref{eq:mahalanobis-score-computation}, and finding the conformal quantile.
Through suitable tightening, joint-in-time chance constraint satisfaction is
established for any trajectory that is feasible for the nominal system. To
further arrive at a recursively feasible formulation, we make use of terminal
ingredients. We define the sets $\mathcal Z_\infty$ and $\mathcal V_\infty$ as
\begin{align}
    \label{eq:z-infty-v-infty}
    \mathcal Z_\infty &\subseteq \bigcap_{t = 1}^{\bar N} \mathcal Z_t
        & \mathcal V_\infty \subseteq \bigcap_{t = 1}^{\bar N} \mathcal V_t.
\end{align}

\begin{assumption}
    \label{ass:terminal-set}
    There exists a non-empty terminal set
    $\mathcal Z_\text{f} \subseteq \mathcal Z_\infty$, with an associated local
    control law $\pi_\text{f}\colon \mathcal Z_\text{f} \to \mathcal V_\infty$
    that renders the terminal set positive invariant such that
    $\forall z \in \mathcal Z_f$, $
        A z + B \pi_\text{f}(z) \in \mathcal Z_\text{f}$.
\end{assumption}

To introduce feedback on the true state, the cost is a function of $x(t)$,
following the uncertain dynamics \cref{eq:dynamics}. The expectation in the cost
\cref{opt:sftoc-cost} can be approximated numerically by using additional
samples of the disturbance trajectory $\{W^{(j)}\}_{j=1}^{S}$.
This results in the optimization problem:
\begin{mini!}|s|
    {V}
    {
        \mathbb E_{\{W^{(j)}\}} \left[
            V_{\mathrm{f}}(x^j_N) + \sum_{i = 0}^{N - 1} \ell(x^j_i, u^j_i)
        \right]
        \label{opt:cpmpc-cost}
    }
    {\label{opt:cpmpc}} % label for opt. problem
    {} % result expr of min() [e.g., J^*(U) = ]
    \addConstraint{\forall i}{= 0, \dots, N - 1}
    \addConstraint{\forall j}{= 1, \dots, S}
    \addConstraint{z_{i + 1}}{= A z_i + B v_i
        \label{opt:cpmpc-dyn-z}
    }
    \addConstraint{e^j_{i + 1}}{= (A + BK) e^j_i + w^{(j)}_{t + i}
        \label{opt:cpmpc-samples}
    }
    \addConstraint{x^j_{i + 1}}{= z_{i + 1} + e^j_{i + 1}
        \label{opt:cpmpc-dyn-x}
    }
    \addConstraint{u^j_i}      {= K e^j_i + v_i
        \label{opt:cpmpc-dyn-u}
    }
    \addConstraint{z_i}        {\in \mathcal Z_{t + i},\quad
                   v_i          \in \mathcal V_{t + i}   
        \label{opt:cpmpc-const-nom}
    }
    \addConstraint{z_N}        {\in \mathcal Z_\text{f}         
        \label{opt:cpmpc-term-z}
    }
    \addConstraint{x_0}        {= x(t), z_0 = z(t), e^j_0 = x_0 - z_0
        \label{opt:cpmpc-init}
    }
\end{mini!}

\begin{theorem}
    \label{thm:if-cpmpc}
    Let $\mathcal E, \mathcal Z_t, \mathcal V_t$ be as in
    \cref{thm:constraint-satisfaction}  for $t = 1, \dots, \bar N$. Further
    assume that \cref{ass:terminal-set} holds. If problem \cref{opt:cpmpc}
    together with the update law $z(t) = z_1^*(t - 1)$ (and $z(0) = x(0)$) is
    feasible at $t = 0$ for $x(0)$, then it is recursively feasible.
    Furthermore, the closed-loop trajectories $x^{(0)}(t)$, $u^{(0)}(t)$ under
    the control law \cref{eq:if-cpmpc-feedback-law} satisfy the joint-in-time
    chance constraints
    \crefrange{eq:chance-constraints-x}{eq:chance-constraints-u}.
\end{theorem}
\ifarXiv
\begin{proof}
    To show recursive feasibility, we follow a standard procedure in the
    literature and use a shifted-sequence argument:
    Let $Z^*(t) = (z_0^*(t), \dots, z_N^*(t))$,
        $V^*(t) = (v_0^*(t), \dots, v_{N - 1}^*(t))$ be the optimizer to
    \cref{opt:cpmpc}
    at time $t$. We prove that the shifted and extended (candidate) solution
    \begin{align*}
        Z(t + 1)    &= \left(z_1^*(t), \dots, z_N^*(t), A z_N^*(t) +
                        B \pi_\text{f}(z_N^*(t))\right),\\
        V(t + 1)    &= \left(v_1^*(t), \dots, v_{N - 1}^*(t),
                        \pi_\text{f}(z_N^*(t))\right)
    \end{align*}
    satisfies \crefrange{opt:cpmpc-const-nom}{opt:cpmpc-term-z}, and is thus
    feasible for \cref{opt:cpmpc} at time $t + 1$: Since the
    constraints are also shifted, $z_0(t + 1), \dots, z_{N - 1}(t + 1)$
    and $v_0(t + 1), \dots, v_{N - 2}(t + 1)$ satisfy
    \cref{opt:cpmpc-const-nom}, with $z_N^*(t) \in \mathcal Z_\text{f} \subseteq
    \mathcal Z_{(t + 1) + (N - 1)}$ by \cref{eq:z-infty-v-infty}.
    By \Cref{ass:terminal-set}, we have that $v_{N - 1}(t + 1) =
    \pi_\text{f}(z_N^*(t)) \in \mathcal V_\infty$ satisfies
    \cref{opt:cpmpc-const-nom} and ensures that
    $z_{N}(t + 1) = A z_N^*(t) + B \pi_\text{f}(z_N^*(t)) \in 
    \mathcal Z_\mathrm{f}$ satisfies \cref{opt:cpmpc-term-z}.
    The shifted sequence is dynamically feasible, fulfilling
    \cref{opt:cpmpc-dyn-z}, and the update law \cref{eq:if-update-law} matches
    \cref{opt:cpmpc-init}. Using $W^{(j)}$,
    \crefrange{opt:cpmpc-samples}{opt:cpmpc-dyn-x} may be constructed, which are
    not subject to constraint bounds.
    Therefore, $Z(t + 1), V(t + 1)$
    are a feasible solution to the optimization problem at time $t + 1$.
    
    Since the problem \cref{opt:cpmpc} is recursively feasible, it holds that
    $z(t) \in \mathcal Z_t$ and $v_0^*(t) \in \mathcal V_t$ for all
    $t = 1, \dots, \bar N$.
    Satisfaction of \crefrange{eq:chance-constraints-x}{eq:chance-constraints-u}
    then follows from \cref{thm:constraint-satisfaction}.
\end{proof}
\fi

Problem \cref{opt:cpmpc} is convex for quadratic costs and ensures closed-loop
chance constraint satisfaction via offline tightening, i.e., at no additional
online computations. Increasing the number of samples in \cref{opt:cpmpc-cost}
improves the approximation but enlarges the problem size, without affecting the
feasible domain.
\begin{remark}
    To deal with the end of the horizon, different approaches may be taken:
    Compute the joint-in-time confidence regions using $\bar N + N$ samples,
    run the optimization problem for $\bar N - N$ time steps, or, use shrinking
    horizon approaches when approaching $\bar N$.
\end{remark}

\subsection{Output Feedback SMPC}
\label{sec:output-feedback-smpc}

If only output measurements are available, the added uncertainty from state
estimation must be considered. The full system dynamics are
\begin{equation}
    \label{eq:output-dynamics}
    \begin{split}
        x(t + 1) &= A x(t) + B u(t) + w(t),\\
        y(t) &= C x (t) + D u(t) + \eta(t),
    \end{split}
\end{equation}
where, similarly, $\eta(t) \sim \mathcal Q^{\eta(t)}$ is drawn from an unknown
distribution. In the following, the approach to solving \cref{opt:sftoc} is
extended to the more general setting of \cref{eq:output-dynamics}, assuming the
availability of i.i.d. samples $\{ \mathcal H^{(k)} \}_{k = 1}^M$ of the
measurement noise trajectories, where
$\mathcal H^{(k)} = (\eta^{(k)}(0), \dots, \eta^{(k)}(\bar N - 1))$.
\begin{remark}
Assuming access to process and measurement noise samples requires separate
disturbance estimation procedures, as in practice only the noisy output is
measurable. However, it allows establishing guarantees for the
observer-controller system, while its relaxation is left for future work.
\end{remark}
A state observer augments the control system, producing the
state estimate $\hat x(t)$. The observer is given by
\begin{equation*}
    \begin{split}
        \hat x(t + 1) &= A \hat x(t) + B u(t) + L(y(t) - \hat y(t)),\\
        \hat y(t) &= C \hat x(t) + D u(t),
    \end{split}
\end{equation*}
where $L$ is such that $A_L := A - LC$ is Schur stable. A well-established
decoupling considers additionally the estimation error
$\hat e(t) = x(t) - \hat x(t)$
and the error between state estimate and a nominal system
$\bar e(t) = \hat x(t) - z(t)$, such that \cite{mayne2006_robust}
\begin{align}
    \notag
    x(t) &= \hat x(t) + \hat e(t) = z(t) + \bar e(t) + \hat e(t),\\
    \label{eq:estimation-error}
    \hat e(t + 1) &= A_L \hat e(t) + w(t) - L \eta(t).\\
    \label{eq:nominal-error-dynamics}
    \bar e(t + 1) &= A_K \bar e(t) + L(C \hat e(t) + \eta(t)).
\end{align}
The two error systems are thus independent of the nominal input and state
trajectory if the pre-stabilizing control law $u(t) = K \bar e(t) + v_0^*(t)$
is employed \cite{mayne2006_robust}. As they only depend on the noise
trajectories $(W, \mathcal H)$, given a known initial condition
$x(0) = \hat x(0) = \bar x(0) = x_0$,
the approach from \cref{sec:indirect-feedback-smpc} can be adapted similarly:
The sample trajectories $\{ (W^{(k)}, \mathcal H^{(k)}) \}_{k = 1}^M$,
propagated through \crefrange{eq:estimation-error}{eq:nominal-error-dynamics},
give rise to trajectories $\{(\hat E^{(k)}, \bar E^{(k)})\}_{k = 1}^M$
which are i.i.d. with the realizations of $(\hat E^{(0)}, \bar E^{(0)})$ in
closed loop. Defining the score function, for all $i = 1, \dots, M$,
\begin{equation}
    \label{eq:mahalanobis-score-combined}
    s(\hat E^{(i)}, \bar E^{(i)}) = \max_{t = 1, \dots, \bar N}
        \norm{\hat e(t) + \bar e(t)}_{\hat \Sigma^{-1}_t},
\end{equation}
where the sample covariance $\hat \Sigma_t$ of the sum
$\hat e(t) + \bar e(t)$ is determined on hold-out data as in
\crefrange{eq:empirical-mean}{eq:empirical-covariance}. This yields a confidence
region $\mathcal E$ such that
\begin{equation*}
    \prob{(\hat e^{(0)}(t) + \bar e^{(0)}(t)) \in \mathcal E_t,
        \forall t = 1, \dots, \bar
N} \geq 1 - \vartheta.
\end{equation*}
We summarize our data-driven approach to the output-feedback setting in the
following optimization problem and then state the resulting guarantees:
\begin{mini!}|s|
    {V}
    {
        \mathbb E_{\{(W^{(j)},\mathcal H^{(j)})\}} \left[
            V_{\mathrm{f}}(x^j_N) + \sum_{i = 0}^{N - 1} \ell(x^j_i, u^j_i)
        \right]
        \label{opt:lqgmpc-cost}
    }
    {\label{opt:lqgmpc}} %
    {}          % result expr of min() [e.g., J^*(U) = ]
    \addConstraint{\forall i}         {= 0, \dots, N - 1}
    \addConstraint{\forall j}         {= 1, \dots, S}
    \addConstraint{{     z}_{i + 1}}  {= A z_i + B v_i}
    \addConstraint{{\hat e^j}_{i + 1}}{= (A - LC)\hat e^j_i + w^{(j)}_{t + i}
                                                        - L \eta^{(j)}_{t + i}
        \label{opt:lqg-est-sample}
    }
    \addConstraint{{\bar e^j}_{i + 1}}{= (A + BK) \bar e^j_i + LC \hat e^j_i
                                                        + L \eta^{(j)}_{t + i}
        \label{opt:lqg-nom-sample}
    }
    \addConstraint{{     x}^j_{i + 1}}{= z_{i + 1} + \hat e^j_{i + 1} +
                                                     \bar e^j_{i + 1}}
    \addConstraint{{     z}_i}        {\in \mathcal Z_{t + i},
                   \quad v_i           \in \mathcal V_{t + i}   
        \label{opt:lqg-const-nom}
    }
    \addConstraint{z_N}               {\in \mathcal Z_\text{f}         
        \label{opt:lqg-term-z}
    }
    \addConstraint{x_0}               {= \hat x(t), z_0 = z(t),
                                         \bar e^j_0 = x_0 - z_0}
\end{mini!}
\begin{theorem}
  \label{thm:lqg-cpmpc}
    Let $\mathcal E$ be a confidence region of level $1 - \vartheta$ for the
    \emph{joint} process $\hat E(t) + \bar E(t)$ and $\mathcal Z_t,
    \mathcal V_t$ be defined analogously to
    \crefrange{eq:tightened-state-constraints}{eq:tightened-input-constraints}
    for $t = 1, \dots, \bar N$.
    Further assume that \cref{ass:terminal-set} holds. If problem
    \cref{opt:lqgmpc} together with the update law $z(t) = z_1^*(t - 1)$
    (and $z(0) = x(0)$), is feasible at $t = 0$ for $x(0)$, then it is
    recursively feasible. Furthermore, under the control law
    $u(t) = K(\hat x(t) - z(t)) + v_0^*(t)$,
    the closed-loop trajectories $x^{(0)}(t)$, $u^{(0)}(t)$ satisfy the
    joint-in-time chance constraints
    \crefrange{eq:chance-constraints-x}{eq:chance-constraints-u}.
\end{theorem}
\ifarXiv
\begin{proof}
    We may use the analogous shifted-sequence argument as in
    \cref{thm:if-cpmpc}. The shifted and extended sequence
    $(z_1^*, \dots, z_N^*, A z_n^* + B \pi_\text{f}(z_N^*(t))$ and
    $(v_1^*, \dots, v_N^*, \pi_\text{f}(z_N^*(t))$ is a feasible solution to
    \cref{opt:lqgmpc} and satisfies constraints at time $t + 1$, as the
    constraints are also shifted and because \cref{ass:terminal-set} holds
    for the two final entries.
    
    Since the problem \cref{opt:cpmpc} is recursively feasible, it holds that
    $z(t) \in \mathcal Z_t$ and $v_0^*(t) \in \mathcal V_t$ for all
    $t = 1, \dots, \bar N$. Satisfaction of
    \crefrange{eq:chance-constraints-x}{eq:chance-constraints-u} then follows
    from an analogous argument to \cref{thm:constraint-satisfaction}:
    The realizations of the random variables $\hat E^{(0)}, \bar E^{(0)}$
    are i.i.d. with the trajectories in the calibration dataset used to
    find the conformal confidence region $\mathcal E$, so it holds that
    $\prob{\hat E^{(0)} + \bar E^{(0)} \in \mathcal E} \geq 1 - \vartheta$.
    It suffices to note that
    (where $\forall t$ denotes $t \in \{1, \dots, \bar N \}$)
    \begin{equation*}
        \begin{split}
            \prob{x(t) \in \mathcal X, \forall t} &=
                \prob{z(t) + \bar e(t) + \hat e(t) \in \mathcal X,~\forall t}\\
                &\geq \prob{\bar e(t) + \hat e(t) \in \mathcal E_t,~\forall t}
                \geq 1 - \vartheta,
        \end{split}
    \end{equation*}
    since $(\mathcal X \ominus \mathcal E_t) \oplus
            \mathcal E_t \subseteq \mathcal X$.
    The analogous argument holds for the input constraints.
\end{proof}
\fi

In \cref{opt:lqgmpc-cost}, we again leverage additional samples
$\{(W^{(j)}, \mathcal H^{(j)})\}_{j=1}^{S}$ to approximate the expectation
\cref{opt:sftoc-cost}.

\section{SIMULATION EXAMPLE}
\label{sec:simulations}
\begin{figure*}[h]
    \centering
        \begin{subfigure}{0.22\textwidth}
            \includegraphics[width=\textwidth]
                {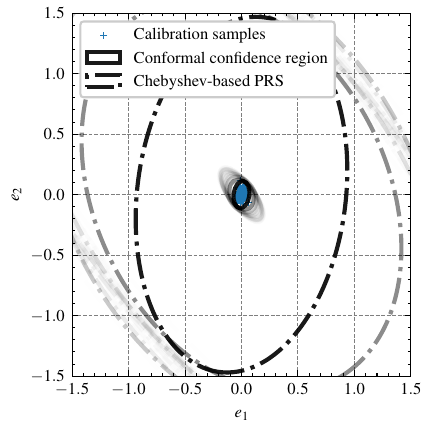}
            \caption{}
            \label{fig:confidence-regions}
            \vspace{-2mm}
        \end{subfigure}
        \hfill
        \begin{subfigure}{0.22\textwidth}
            \includegraphics[width=\textwidth]
                {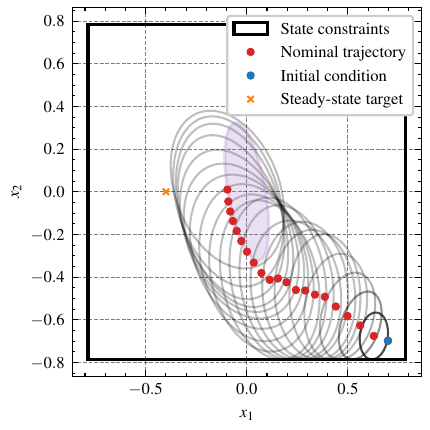}
            \caption{}
            \label{fig:nominal-trajectory}
            \vspace{-2mm}
        \end{subfigure}
        \hfill
        \begin{subfigure}{0.22\textwidth}
            \includegraphics[width=\textwidth]
                {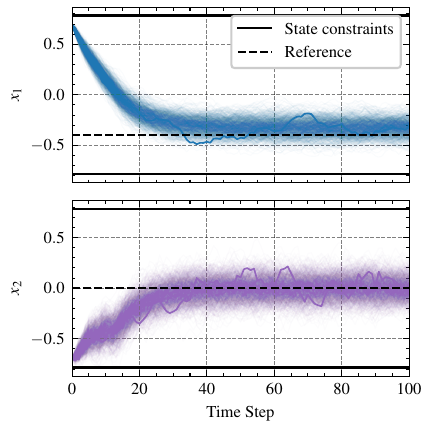}
            \caption{}
            \label{fig:cl-trajectory}
            \vspace{-2mm}
        \end{subfigure}
        \hfill
        \begin{subfigure}{0.22\textwidth}
            \includegraphics[width=\textwidth]
                {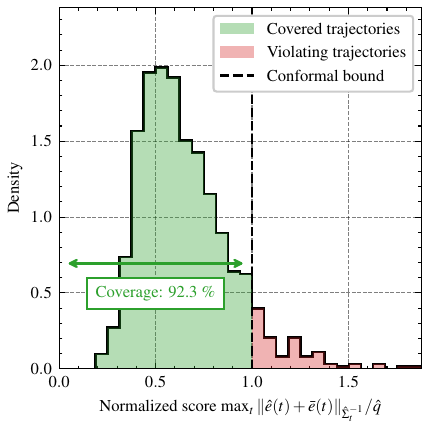}
            \caption{}
            \label{fig:conformal-scores}
            \vspace{-2mm}
        \end{subfigure}
    \caption{a) Confidence regions for the state feedback case with
             $\vartheta = 0.1$ for $t = 1, \dots, \bar N$. Higher opacity
             indicates later time steps.
             b) Nominal trajectory and projected confidence regions at $x_0$
                for $t = 0$ for the output feedback case.
             c) Closed-loop trajectories, one realization
                emphasized.
             d) Empirical distribution of newly drawn error trajectory scores,
                normalized by $\hat q$.}
    \vspace{-6mm}
\end{figure*}
We demonstrate the proposed approach in a numerical example on an open-loop
unstable linearized inverted pendulum and compare the method to solving the
stochastic optimal control problem \cref{opt:sftoc}, applying the tube policy
proposed in~\cite{vlahakis2024_conformal}. Further, we compare the size of the
prediction regions obtained from our approach to PRS based on mean-variance
information obtained from data. The system dynamics are given by
\cref{eq:output-dynamics}, subject to tracking costs
$\ell(x, u) = \norm{x - x_s}_Q^2 + \norm{u - u_s}_R^2$,
with $(x_s, u_s)$ a steady state of~\cref{eq:output-dynamics}, corresponding
terminal cost $V_\text{f}(x) = \| x - x_s \|_P^2$, and constraints
$\mathcal X, \mathcal U$. We set $K$ as the infinite-horizon LQR controller for
$Q, R$ and obtain $P$ from the discrete-time Riccati equation.
The goal is for a maximum violation probability of $\vartheta = 0.1$ to hold
over a horizon $\bar N = 100$, where the MPC horizon is $N = 20$.
The optimization problems \cref{opt:cpmpc} and \cref{opt:lqgmpc}  are
implemented using the \emph{ampyc} framework \cite{ampyc}. All numerical values
are in
\ifarXiv
    the \hyperref[sec:appendix]{Appendix}.
\else
    \cite[Appendix]{vogel2025conformalmpc}.
\fi

\paragraph{Confidence region comparison}

In the state feedback case, we construct confidence regions of $1 - \vartheta$
over $\bar N + N$ time steps based on $M = 750$ disturbance trajectories from a
mixture of two Gaussian distributions, partitioned into
$\abs{\mathcal D_\text{fit}} = 250$ and $\abs{\mathcal D_\text{cal}} = 500$. The
sample covariance $\hat\Sigma_t$ is computed by \cref{eq:empirical-covariance}
for $t = 1, \dots, \bar N + N$, assuming w.l.o.g. that $\hat \mu_t = 0$.
\Cref{thm:cp-confidence-regions} is used with
\cref{eq:mahalanobis-score-computation} to obtain $\hat q$, such that
$\mathcal E^\text{cp}_t=\{\,e \mid e^\top \hat \Sigma^{-1}_t e \leq \hat q\,\}$.
We compare against mean-variance PRS \cite{hewing2020_recursively} and obtain
$\mathcal E^\text{mv}_t = \{\,e \mid
    e^\top \hat \Sigma^{-1}_t e \leq \tilde p\,\},$
where $\tilde p = n_x / (\vartheta/(\bar N + N))$ by the Chebyshev inequality
and Boole's inequality.
\Cref{fig:confidence-regions} depicts the two confidence regions for all time
steps, with lighter shades indicating later time steps. The Chebyshev bound
is---as a worst-case bound---more conservative than the conformal bound,
in particular due to the required union bound over large $\bar N + N$.

\paragraph{Closed-Loop Performance}

Next, we use the obtained conformal confidence region and solve the optimal
control problem with state feedback, comparing to \cite{vlahakis2024_conformal}.
We initialize the system at ${x_0 = [0.70, -0.66]^\top}$, and, for identical
confidence regions, solve the problem once at $t = 0$ for the horizon
$\bar N$, applying the control law $u(t) = K(x(t) - z_t^*(0)) + v_t^*(0)$.
For our proposed approach, we solve the problem in receding horizon of $N$ and
apply $u(t) = K(x(t) - z_0^*(t)) + v_0^*(t)$. Compared with
\cite{vlahakis2024_conformal}, we achieve an average cost reduction of 1.9~\%
over $n_\text{test} = 10^3$ test trajectories with a shorter horizon.

\paragraph{Output Feedback SMPC}

Lastly, output feedback under measurement noise drawn from a Laplace
distribution is considered. An additional $M$ measurement noise trajectories are
used to find i.i.d. samples of the errors $\bar e(t), \hat e(t)$ under the fixed
observer gain $L$. The system and the observer share the same initial condition
$\hat x(0) = x(0) = x_0$, and evolve in closed loop for $n_\text{test} = 10^3$
realizations of the noise trajectories over time. In
\Cref{fig:nominal-trajectory}, the nominal trajectory computed at time $t = 0$
and the confidence regions are shown. Further, multiple realizations of the
closed-loop trajectory $x(t)$ are depicted in \Cref{fig:cl-trajectory}.
Closed-loop satisfaction of
\crefrange{eq:chance-constraints-x}{eq:chance-constraints-u} is 99\,\%,
achieving the desired level at some conservatism due to the system tracking a
point in the interior and not at the boundary. \Cref{fig:conformal-scores}
depicts the empirical distribution of the conformal scores achieved by the
realized trajectories, indicating that 92.3\,\% of newly drawn
error trajectories are covered by the probabilistic bound, emphasizing the
non-conservative nature of conformal prediction.

\section{CONCLUSION}

In this work, we presented a stochastic MPC formulation for linear systems
subject to joint-in-time chance constraints and disturbances from an unknown
distribution. The proposed approach uses available samples to construct
conformal confidence bounds for error trajectories, enabling the handling of
joint chance constraints in a relaxation of the underlying stochastic program.
We proved recursive feasibility of the relaxed problem for both the state and
output feedback cases, guaranteeing chance constraint satisfaction in
closed loop. Finally, the guarantees were verified in a simulation example,
showcasing that the error confidence regions are non-conservative
while satisfying the desired probability level.

%%%%%%%%%%%%%%%%%%%%%%%%%%%%%%%%%%%%%%%%%%%%%%%%%%%%%%%%%%%%%%%%%%%%%%%%%%%%%%%%

\ifarXiv
\section*{APPENDIX}
\label{sec:appendix}
In the following, we provide some more detail on the simulation example from
\cref{sec:simulations}. The system follows the dynamics
\cref{eq:output-dynamics}, where
\begin{align*}
    A &= \begin{bmatrix}
        1.00 & 0.10\\
        0.75 & 0.95
    \end{bmatrix}, & B &= \begin{bmatrix}
        0.00\\
        0.10
    \end{bmatrix}, &
    C &= \begin{bmatrix}
        1 & 0
    \end{bmatrix}, & D &= 0.
\end{align*}
The state and input constraints are given by
$\mathcal X = [-\frac{\pi}{4}, \frac{\pi}{4}]^2$, and $\mathcal U = [-8, 8]$,
respectively.

We choose $Q = 10^3 \cdot \mathbb I_2$, $R = 10$ and solve the DARE for
$(A, B, Q, R)$ for $P$ and obtain $K$ as
\begin{equation*}
    K = -(R + B^\top P B)^{-1} B^\top P A = \begin{bmatrix}
        -14.87 & -7.24
    \end{bmatrix}.
\end{equation*}
The tracked steady-state is $x_s = [-0.5, 0.0]^\top$ and
$u_s = B^\dagger (x_s - A x_s) = 3.75$.

\paragraph*{Noise distributions}
The process and measurement noise distributions are
\begin{align*}
    w(t) &\sim \frac{1}{2}\mathcal N(\mu_w, \Sigma_w) +
        \frac{1}{2}\mathcal N(-\mu_w, \Sigma_w)\\
    \eta(t) &\sim \operatorname{Laplacian}(\mu_\eta, b_\eta),\\
    \mu_w &= [0.01, 0.001]^\top,
        \quad \Sigma_w = 10^{-3} \cdot \operatorname{diag}(0.1, 1)\\
    \mu_\eta &= 0, \quad b_\eta = 5 \cdot 10^{-4}.
\end{align*}
The implementation is available at
\url{https://gitlab.ethz.ch/ics/conformal-prediction-smpc}.

\fi

%%%%%%%%%%%%%%%%%%%%%%%%%%%%%%%%%%%%%%%%%%%%%%%%%%%%%%%%%%%%%%%%%%%%%%%%%%%%%%%%

\bibliographystyle{IEEEtran}
\bibliography{bibliography}

\end{document}
\typeout{get arXiv to do 4 passes: Label(s) may have changed. Rerun}